\theoremstyle{notations}
\newtheorem{Lemma}{Lemma}
\begin{document}
\title{Pinching-Antenna Systems (PASS) Meet Multiple Access: NOMA or OMA? }

\author{
	    Qiao Ren,
	    Xidong Mu,
		Siyu Lin,
		and Yuanwei Liu,~\IEEEmembership{Fellow,~IEEE}
\thanks{Q. Ren and S. Lin are with the School of Electronic and Information Engineering, Beijing Jiaotong University, Beijing 100044, China (e-mail: {qiaoren,sylin}@bjtu.edu.cn).}
\thanks{Xidong Mu is with the Centre for Wireless Innovation (CWI), School of Electronics, Electrical Engineering and Computer Science, Queen's University Belfast, Belfast, BT3 9DT, U.K. (x.mu@qub.ac.uk)}
\thanks{Y. Liu is with the Department of Electrical and Electronic Engineering (EEE), The University of Hong Kong, Hong Kong (e-mail: yuanwei@hku.hk).}
}

\markboth{}%
{Shell \MakeLowercase{\textit{et al.}}: A Sample Article Using IEEEtran.cls for IEEE Journals}
\maketitle

\begin{abstract}
A fundamental two-user PASS-based communication system is considered under three MA schemes, namely non-orthogonal multiple access (NOMA), frequency division multiple access (FDMA), and time division multiple access (TDMA). For each MA scheme, a pinching beamforming optimization problem is formulated to minimize the required transmit power for satisfying users' rate requirements. For NOMA and FDMA, a two-stage algorithm is proposed, where the locations of PAs are derived sequentially by using the successive convex approximation (SCA) method and fine-turning phase adjustment. For TDMA, by leveraging the time-switching feature of PASS, the optimal pinching beamforming of each time slot is derived to maximize the served user channel gain. Numerical results are provided to show that: 1) PASS can achieve a significant performance gain over conventional antenna systems, and 2) NOMA consistently outperforms FDMA, while TDMA provides superior performance than NOMA for symmetric user rate requirements.
\end{abstract}
\begin{IEEEkeywords}
Non-orthogonal multiple access (NOMA), orthogonal multiple access (OMA), pinching-antenna systems (PASS), pinching beamforming. 
\end{IEEEkeywords}

\section{Introduction}
Driven by the further development toward sixth-generation (6G) networks, various flexible-antenna technologies, such as fluid antennas \cite{Wong2021Eluid}, and movable-antenna systems \cite{Zhu2024Modeling}, have recently attracted significant attention for their ability to dynamically adjust the antenna locations, thereby improving wireless communication performance. However, the above flexible-antenna technologies have limitations on mitigating large-scale path loss due to the wavelength-level adjustment range \cite{Ding2025Flexible}. To overcome these shortcomings, pinching-antenna systems (PASS) have been introduced as a revolutionary flexible-antenna technology \cite{Liu2024PASS}. Specially, PASS employ dielectric waveguides as the transmission medium to mitigate long-range path loss, and apply low‐cost dielectric materials, called pinching antennas (PAs), to serve as reconfigurable radiation nodes at any location along the waveguides. This design establishes new line-of-sight links and significantly improves channel quality compared to the conventional antenna technologies. Moreover, PAs can be easily added or removed along the waveguide, enabling highly flexible and dynamic reconfiguration, making PASS particularly well-suited for different communication scenarios \cite{Ding2025Flexible}. 

Since each waveguide must be fed with the same signal stream to PAs, multiple access (MA) techniques are essential for enabling PASS-based multi-user communications over a single waveguide\cite{Ding2025Flexible}. As shown in Fig. \ref{Fig_SystemArchitecture}, MA techniques can be broadly classified into orthogonal multiple access (OMA) and non-orthogonal multiple access (NOMA), depending on whether a given resource block (in time, frequency, or other domains) serves multiple users or only one user. In downlink NOMA, a user with stronger channel gain utilizes successive interference cancellation (SIC) to remove the signals of weaker users before decoding its signal. Typically, the decoding order is determined by the user channel power gains. However, with the aid of PASS, this decoding order can be reconfigured dynamically through PA placement, which expands the available performance trade-offs among users. Although recent studies have investigated integrating NOMA with PASS \cite{Wang2025Antenna, Tegos2025Minimum, Tyrovolas2025Performance}, a comprehensive theoretical comparison between NOMA and OMA in PASS-enabled wireless communications has yet to be thoroughly addressed.
\begin{figure}[!tb] 
	\centering
	\includegraphics[width=0.4\textwidth]{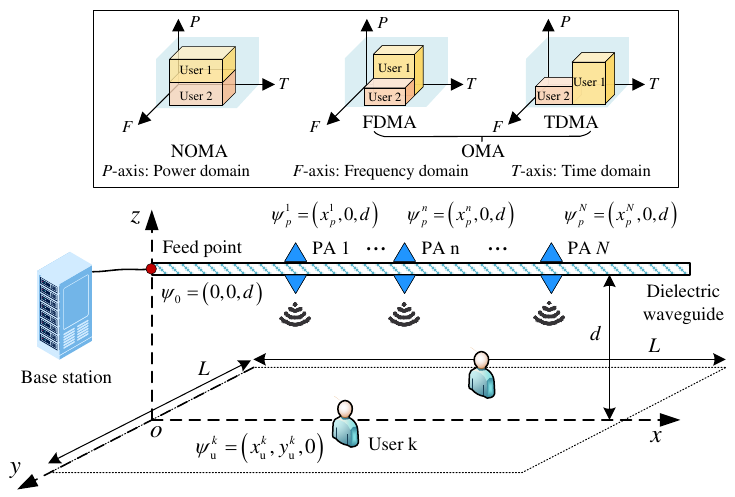} 
	\caption{An illustration of NOMA and OMA in PASS-enabled communications with two users.}
	\label{Fig_SystemArchitecture}
	\vspace{-1.5em}
\end{figure}

In this paper, we study a two-user PASS-based communication system under three MA schemes, including NOMA, frequency division multiple access (FDMA), and time division multiple access (TDMA). For each MA scheme, we formulate a pinching beamforming optimization problem that minimizes the required transmit power while satisfying rate requirements of the users. For NOMA and FDMA, we develop a two-stage algorithm to tackle these non-convex problems, where the coarse locations of PAs are derived by the successive convex approximation (SCA) method. Given the initial positioning, we then fine-tune the PA positions to ensure constructive signal superposition at users and derive a high-quality pinching beamforming solution. For TDMA, with the time-switching feature of PASS, we can derive the pinching beamforming of each time slot to maximize the served user channel gain. Numerical results reveal that 1) PASS can achieve a higher performance gain than conventional antenna systems, and 2) NOMA consistently achieves a better performance than FDMA, while TDMA outperforms NOMA for symmetric user rate requirements due to the time-switching feature of PASS.

\section{System Model and Problem Formulation}
We consider a fundamental single-waveguide PASS-based two-user communication system, consisting of one access point (AP), $2$ single-antenna users, and a dielectric waveguide equipped with $N$ PAs, as illustrated in Fig. \ref{Fig_SystemArchitecture}. The two users, indexed by $k\in\left\{1,2\right\}$, are randomly distributed within a square area of side length $L$ meters. Fig. \ref{Fig_SystemArchitecture} illustrates two OMA schemes: FDMA and TDMA, which serve two users in two orthogonal frequency or time resources, respectively. For NOMA, the same time/frequency resources are shared for two users. The location of user $k$ is denoted by $\psi_{\mathrm{u}}^k=\left[x_{\mathrm{u}}^k, y_{\mathrm{u}}^k, 0\right]$. We assume that the waveguide with the length of $L$ is deployed parallel to the $x$-axis at height $d$, with the feed point at $\psi_{0}=\left[0, 0, d\right]$. Each PA $n$ is placed at $\psi_{\mathrm{p}}^n=\left[x_{\mathrm{p}}^n, 0, d\right]$, and the vector of $x$-axes is $\mathbf{x}_{\mathrm{p}}=\left[x_{\mathrm{p}}^{1}, \ldots, x_{\mathrm{p}}^{N}\right]$. We make the ideal assumption that the PAs can be activated continuously along the waveguide over a distance much greater than the wavelength, with the minimum spacing ensured to prevent antenna coupling. Then, the feasible set of PAs locations is given by $\mathcal{F}=\left\{x_{\mathrm{p}}^n \mid 0 \leq x_{\mathrm{p}}^n \leq L, x_{\mathrm{p}}^{n+1}-x_{\mathrm{p}}^{n} \geq \Delta\right\}, \forall n=\{1,\dots,N\} $. 

The in-waveguide channel vector between the feed point and the PAs can be expressed as 
\begin{align} 
	\mathbf{g}\left(\mathbf{x}_{\mathrm{p}}\right) &= \frac{1}{\sqrt{N}}{\left[ e^{-j \frac{2 \pi}{\lambda_{\mathrm{g}}}\left\| \psi_{\mathrm{p}}^1 - \psi_0\right\|}, \ldots, e^{-j \frac{2 \pi}{\lambda_{\mathrm{g}}}\left\| \psi_{\mathrm{p}}^N - \psi_0\right\|} \right]^T},
\end{align}
where $\lambda_{\mathrm{g}}=\frac{\lambda}{n_{\mathrm{eff}}}$ is the guided wavelength, $\lambda$ is the signal wavelength in the free space, and $n_{\mathrm{eff}}$ is the effective refractive index of the waveguide \cite{Tegos2025Minimum}. We assume that the total transmit power is equally distributed among the $N$ PAs. The wireless channel vector between the PAs and the user $k$ can be expressed as 
\begin{equation}
\mathbf{h}_{k}\left(\mathrm{\mathbf{x}}_{\mathrm{p}}\right)
= {\left[\frac{\eta e^{-j \frac{2 \pi}{\lambda}\left\| \psi_{\mathrm{u}}^k - \psi_{\mathrm{p}}^1\right\|}}{\left\|\psi_{\mathrm{u}}^k - \psi_{\mathrm{p}}^1\right\|}, \ldots, \frac{\eta e^{-j \frac{2 \pi}{\lambda}\left\| \psi_{\mathrm{u}}^k - \psi_{\mathrm{p}}^N\right\|}}{\left\|\psi_{\mathrm{u}}^k - \psi_{\mathrm{p}}^N\right\|} \right]^T},
\end{equation}
where $\eta=\frac{c}{4 \pi f_c}$, with the speed of light $c$, and the carrier frequency $f_c$. The Euclidean distance $\left\| \psi_{\mathrm{u}}^k - \psi_{\mathrm{p}}^n\right\|$ between user $k$ and the PA $n$ can be expressed as \cite{Zhu2025Secure}
\begin{equation}
\left\| \psi_{\mathrm{u}}^k - \psi_{\mathrm{p}}^n\right\| = \sqrt{\left(x_{\mathrm{u}}^k-x_{\mathrm{p}}^n\right)^2+ D_k^2}.
\end{equation}
where $D_k = \sqrt{\left(y_{\mathrm{u}}^k\right)^2 + d^2}$.

\subsection{NOMA}
For NOMA, the signals of users are transmitted by utilizing the superposition coding, which is given by
\begin{equation}
s=\sqrt{P_1}s_1 + \sqrt{P_2}s_2,
\end{equation}
where $P_k$ and $s_k$ denote the transmit power and the normalized transmitted data symbol for user $k$, respectively, with $\mathbb{E}\left[s_k s_k^*\right]=1$. Accordingly, the superimposed signal received at the user $k$ can be written as
\begin{equation}
		y_k = \mathbf{h}_{k}^T\left(\mathrm{\mathbf{x}}_{\mathrm{p}}\right)\mathbf{g}\left(\mathbf{x}_{\mathrm{p}}\right)\left(\sqrt{P_1}s_1 + \sqrt{P_2}s_2\right) + n_k,
\end{equation}
where $n_k \sim \mathcal{C N}\left(0, \sigma_k^2\right)$ is the additive white Gaussian noise at the user $k$, and $\sigma_k^2$ is the noise power. SIC is employed to avoid multiple-access interference. Specifically, the user with stronger channel first decodes the signal intended for the weaker user, removes it from its received signal, and then decodes its data. Thus the achievable data rate of the user $k \in \left\{1,2\right\}$ is given by 
\begin{equation}
	R_k^\mathrm{N} =\log_2\left(1+\frac{P_k |v_k\left(\mathrm{\mathbf{x}}_{\mathrm{p}}\right)|^2}{\lambda_k P_{\bar{k}} |v_k\left(\mathrm{\mathbf{x}}_{\mathrm{p}}\right)|^2 +N \sigma_k^2}\right),
\end{equation}
where $|v_k\left(\mathrm{\mathbf{x}}_{\mathrm{p}}\right)|^2$ is the channel power gain as follows
\begin{subequations}\label{Eq_v_k}
	\begin{align}
		\left|v_k\left(\mathrm{\mathbf{x}}_{\mathrm{p}}\right)\right|^2 &= \left|\sum_{n=1}^N \frac{\eta e^{-j \phi_k^n}}{\left\|{\psi}_{\mathrm{u}}^k-{\psi}_{\mathrm{p}}^{n}\right\|}\right|^2 \nonumber\\
		&= \left| \sum_{n=1}^N \frac{\eta e^{-j \left[\frac{2 \pi}{\lambda}\left\|\psi_{\mathrm{u}}^k-\psi_{\mathrm{p}}^{n}\right\|+\frac{2 \pi}{\lambda_g} \left\|\psi_{\mathrm{p}}^n-\psi_0\right\| \right]}}{\left\|{\psi}_{\mathrm{u}}^k-{\psi}_{\mathrm{p}}^{n}\right\|}\right|^2. \tag{\ref{Eq_v_k}}
	\end{align}
\end{subequations}
Note that $\bar{k}=1$, if $k=2$; and $\bar{k}=2$, otherwise. With two users, there are $2!=2$ possible SIC decoding orders. Therefore, the binary variables, $\lambda_k \in \left\{0,1\right\}, k \in \left\{1,2\right\}$, are introduced to denote the decoding orders and $\lambda_1 + \lambda_2 = 1$. In particular, if $\left|v_1\right|^2 \geq  \left|v_2\right|^2$ (i.e., user 1 is the stronger user), we set $\lambda_1=0$ and $\lambda_2=1$; otherwise, $\lambda_1=1$ and $\lambda_2=0$. We aim to minimize the transmit power at AP by optimizing PA locations $\mathbf{x}_{\mathrm{p}}$, subject to rate requirements of users and continuous antenna activation constraints. The optimization problem is formulated as
\begin{subequations}\label{Pro_NOMA}
	\begin{align}
			\min _{\lambda_k, \mathbf{x}_{\mathrm{p}}, P_1, P_2} & P_1+P_2 \\ 
			\text {s.t.}~~
			&R_k^\mathrm{N} \geq \gamma_k, \forall k, \\
			&0 \leq x_{\mathrm{p}}^n \leq L, \forall n, \\
			& x_{\mathrm{p}}^{n+1}-x_{\mathrm{p}}^{n} \geq \Delta, \forall n, \\
			&\lambda_k \in \left\{0,1\right\},  \sum\nolimits_{k=1}^2 \lambda_k =1, 
	\end{align}
\end{subequations}
where $\gamma_k$ is the minimum rate requirement for user $k$. Since the achievable rate $R_k^\mathrm{N}$ is monotonically increasing in $P_k$, the inequality rate constraint (\ref{Pro_NOMA}a) must be satisfied with equality at the optimal solution. Although there are two possible SIC decoding orders, both of them yield similar characteristics. Hence, the optimal solution is determined by selecting the minimum transmit power between the two cases by applying exhaustively searching. For clarity of presentation, we focus on the specific case $\lambda_1=0$ and $\lambda_2=1$. Under this decoding order, the optimal transmit powers $P_1$ and $P_2$ can be expressed as
\begin{equation}
	\left\{
	\begin{aligned}
		& P_1=\frac{\left(2^{\gamma_1}-1\right) N \sigma_1^2}{|v_1\left(\mathrm{\mathbf{x}}_{\mathrm{p}}\right)|^2}, \\
		& P_2=\frac{\left(2^{\gamma_1}-1\right)\left(2^{\gamma_2}-1\right) N \sigma_1^2}{|v_1\left(\mathrm{\mathbf{x}}_{\mathrm{p}}\right)|^2} + \frac{\left(2^{\gamma_2}-1\right) N \sigma_2^2}{|v_2\left(\mathrm{\mathbf{x}}_{\mathrm{p}}\right)|^2}.
	\end{aligned}
	\right.
\end{equation}
The corresponding optimization problem is transformed into
\begin{subequations}\label{Pro_NOMA_simp}
	\begin{align}
		\min _{\mathbf{x}_{\mathrm{p}}}\quad &  \frac{\left(2^{\gamma_1}-1\right) 2^{\gamma_2} N \sigma_1^2}{|v_1\left(\mathrm{\mathbf{x}}_{\mathrm{p}}\right)|^2} + \frac{\left(2^{\gamma_2}-1\right) N \sigma_2^2}{|v_2\left(\mathrm{\mathbf{x}}_{\mathrm{p}}\right)|^2} \\ 
		\text {s.t.}~~
		& (\ref{Pro_NOMA}c), (\ref{Pro_NOMA}d).
	\end{align}
\end{subequations}

\subsection{OMA}
\subsubsection{FDMA}
For FDMA, the AP communicates with the users simultaneously in two orthogonal frequency slots of equal size. The achievable data rate of the user $k \in \left\{1,2\right\}$ can be expressed as
\begin{equation}
R_k^\mathrm{F}=\frac{1}{2} \log_2\left(1+\frac{P_k |v_k\left(\mathrm{\mathbf{x}}_{\mathrm{p}}\right)|^2}{\frac{1}{2}N\sigma_k^2}\right). 
\end{equation}
Similar to the NOMA scheme, the relevant inequality rate constraint can be transformed into equality constraints. Accordingly, the power minimization problem for FDMA can be formulated as
\begin{subequations}\label{Pro_FDMA_simp}
	\begin{align}
		\min _{\mathbf{x}_{\mathrm{p}}} \quad &  \frac{\left(2^{2\gamma_1}-1\right) N \sigma_1^2}{2 |v_1\left(\mathrm{\mathbf{x}}_{\mathrm{p}}\right)|^2} + \frac{\left(2^{2\gamma_2}-1\right) N \sigma_2^2}{2|v_2\left(\mathrm{\mathbf{x}}_{\mathrm{p}}\right)|^2}\\ 
		\text {s.t.}~~
		& (\ref{Pro_NOMA}c), (\ref{Pro_NOMA}d). 
	\end{align}
\end{subequations}

\subsubsection{TDMA}
For TDMA, the AP communicates with the users simultaneously in two orthogonal time slots of equal size. Note that PASS enable each time slot to have its own PA activation configuration under TDMA, i.e., a time-switching feature. Unlike NOMA and FDMA schemes, where an identical PA activation vector $\mathrm{\mathbf{x}}_{\mathrm{p}}$ is employed to both users, the PA deployment for TDMA is different due to the time-switching feature of PASS. Therefore, the achievable rate for user $k \in \left\{1,2\right\}$ is given by
\begin{equation}\label{Eq_R_kT}
	R_k^\mathrm{T}=\frac{1}{2}\log_2\left(1+\frac{2 P_k |v_k\left(\mathrm{\mathbf{x}}_{\mathrm{p},k}\right)|^2}{N\sigma_k^2}\right),
\end{equation}
where $\mathrm{\mathbf{x}}_{\mathrm{p,k}}=[x_{\mathrm{p,k}}^{1}, \ldots, x_{\mathrm{p,k}}^{N}]$ denotes the PA location vector for user $k$. An additional factor of 2 is applied to the received power term to ensure a fair comparison with NOMA. The power minimization problem for TDMA can be formulated as
\begin{subequations}\label{Pro_TDMA_simp}
	\begin{align}
		\min _{\mathbf{x}_{\mathrm{p,1}}} \quad &  \frac{\left(2^{2 \gamma_1}-1\right) N\sigma_1^2}{2 |v_1\left(\mathrm{\mathbf{x}}_{\mathrm{p},1}\right)|^2} + \min _{\mathbf{x}_{\mathrm{p,2}}} \frac{\left(2^{2 \gamma_2}-1\right) N\sigma_2^2}{2 |v_2\left(\mathrm{\mathbf{x}}_{\mathrm{p},2}\right)|^2} \\ 
		\text {s.t.}~~
		& (\ref{Pro_NOMA}c), (\ref{Pro_NOMA}d).
	\end{align}
\end{subequations}

\section{Proposed Solutions}\label{Sec_Solution}
\subsection{NOMA and FDMA}
Note that the channel power gain of users in the objective functions of (\ref{Pro_NOMA_simp}) for NOMA, and (\ref{Pro_FDMA_simp}) for FDMA comprises both large-scale path loss and the phase alignment among the PAs. Since the free-space and waveguide wavelengths are significantly smaller than the distance between PAs and users, even slight PA position adjustment can induce a periodic phase variation $\phi_k^n,\forall k \in \{1,2\}$ in (\ref{Eq_v_k}). By precisely tuning the PA positions, the phase $\phi_k^n$ can be coherently aligned and enable constructive combining of the signal, and thus improve the channel power gain at user $k$. The corresponding phase alignment constraint can be expressed as
\begin{equation}\label{phi}
	\phi_k^n-\phi_k^{n-1}=2 m_k \pi, \forall m_k \in \mathcal{Z}, n \in \mathcal{N}, k \in \{1,2\}.
\end{equation}

Leveraging this decoupling property, we can first tackle the large‐scale path‐loss optimization. Since NOMA and FDMA share a similar problem structure, we focus on NOMA scheme for clarity. The problem for NOMA can be relaxed as
\begin{subequations}\label{Pro_NOMA_relax}
	\begin{align}
		\min _{\mathbf{x}_{\mathrm{p}}}\quad & \frac{d_1}{ | \sum_{n=1}^N \frac{1}{ \left\|{\psi}_{\mathrm{u}}^1-{\psi}_{\mathrm{p}}^{n}\right\|} |^2} + \frac{d_2}{ | \sum_{n=1}^N \frac{1}{ \left\|{\psi}_{\mathrm{u}}^2-{\psi}_{\mathrm{p}}^{n}\right\|} |^2}\\ 
		\text {s.t.}~~
		& (\ref{Pro_NOMA}c), (\ref{Pro_NOMA}d), (\ref{phi}),
	\end{align}
\end{subequations}
where $d_1 = \frac{\left(2^{\gamma_1}-1\right) 2^{\gamma_2} N \sigma_1^2}{\eta^2} $, and $d_2 = \frac{\left(2^{\gamma_2}-1\right) N \sigma_2^2}{\eta^2}$. Since PA locations impact the large-scale loss and phase-shift constraints at different scales, we propose a two-stage algorithm in which the SCA method is first used to determine the overall PA distribution by minimizing the objective function without considering (\ref{phi}), followed by fine-tuning PA position to satisfy the constructive phase alignment constraints for both users.

\subsubsection{Coarse PA locations Design}
Without considering the constraint (\ref{phi}), the coarse PA locations subproblem (\ref{Pro_NOMA_relax}) can be transformed as
\begin{subequations}\label{Pro_NOMA_relax2}
	\begin{align}
	\min _{\mathbf{x}_{\mathrm{p}}}\quad & \frac{d_1}{ | \sum_{n=1}^N \frac{1}{ \sqrt{(x_{\mathrm{u}}^1-x_{\mathrm{p}}^n)^2+ D_1^2}} |^2} + \frac{d_2}{ | \sum_{n=1}^N \frac{1}{ \sqrt{(x_{\mathrm{u}}^2-x_{\mathrm{p}}^n)^2+ D_2^2}} |^2}\\ 
	\text {s.t.}~~
	& (\ref{Pro_NOMA}b), (\ref{Pro_NOMA}c),
\end{align}
\end{subequations}
To tackle this non-convex problem, we introduce auxiliary variables $\theta^n_k>0$, and let
\begin{equation}\label{Eq_theta_nk1}
\frac{1}{\theta^n_k} \geq \sqrt{(x_{\mathrm{u}}^k-x_{\mathrm{p}}^n)^2+D_k^2}, \quad \forall k \in \left\{1,2\right\}.
\end{equation}
which is jointly convex on both sides w.r.t. $x_{\mathrm{u}}^k$, and $\mathbf{\theta}$, respectively. Then, we employ SCA method by linearizing $\frac{1}{\theta^n_k}$ via the first-order Taylor expansion at $\theta^{n,(l)}_{k}$ as follows
\begin{equation}\label{Eq_theta_nk3}
	\frac{1}{\theta^n_k} \geq f(\theta^n_k | \theta^{n,(l)}_{k}) \triangleq \frac{1}{\theta^{n,(l)}_{k}} - \frac{1}{(\theta^{n,(l)}_{k})^2} \left(\theta^n_k - \theta^{n,(l)}_{k}\right)
\end{equation}
As a result, for given $\theta^{n,(l)}_{k}$, by replacing $\frac{1}{\theta^n_k}$ with its linear lower bound, problem (\ref{Pro_NOMA_relax2}) can be transformed as
\begin{subequations}\label{Pro_NOMA_final}
	\begin{align}
		\min _{\mathbf{x}_{\mathrm{p}},\boldsymbol{\theta}_1,\boldsymbol{\theta}_2 }\quad & \frac{d_1}{ | \sum_{n=1}^N \theta^n_1 |^2} + \frac{d_2}{ | \sum_{n=1}^N \theta^n_2 |^2} \\
		\text {s.t.}~~
		& f(\theta^n_k | \theta_n^{k,(l)}) \geq \sqrt{(x_{\mathrm{u}}^k-x_{\mathrm{p}}^n)^2+D_k^2}, \\
		& \theta^n_1, \theta^n_2 > 0, \\
		& (\ref{Pro_NOMA}c), (\ref{Pro_NOMA}d),
	\end{align}
\end{subequations}
where $\boldsymbol{\theta}_1=\left[\theta_1^1, \ldots, \theta_1^N\right]$, and $\boldsymbol{\theta}_2=\left[\theta_2^1, \ldots, \theta_2^N\right]$. Problem (\ref{Pro_NOMA_final}) is a convex optimization problem, which can be solved by standard convex problem solvers such as CVX \cite{Stephen2004Convex}.

\subsubsection{Fine-Tuning Phase-Coherent Beamforming Design} 
With the optimal coarse PA positions, we can refine their locations to satisfy the phase alignment constraint (\ref{phi}) by moving PAs on the wavelength scale, thereby ensuring that signals sent by different PAs are constructively combined at each user. In this case, the phase $\phi_k^n$ can be approximated around its neighboring PA phase $\phi_k^{(n-1)}$ by first-order Taylor expansion, which is given by \cite{Zhu2025Secure}
\begin{subequations}\label{phi_k}
	\begin{align}
		\phi_k^n &\approx \frac{2\pi}{\lambda} \frac{x_{\mathrm{p}}^{n-1} - x_{\mathrm{u}}^k}{\sqrt{(x_{\mathrm{p}}^{n-1} - x_{\mathrm{u}}^k)^2 + D_k^2}} (x_{\mathrm{p}}^n - x_{\mathrm{p}}^{n-1}) \tag{\ref{phi_k}}\\
		&+ \frac{2\pi}{\lambda} \sqrt{(x_{\mathrm{p}}^{n-1} - x_{\mathrm{u}}^k)^2 + D_k^2}  + \frac{2\pi}{\lambda_g} x_{\mathrm{p}}^{n} \nonumber.
	\end{align}
\end{subequations}
Hence, for each user $k$, the constraint (\ref{phi}) can be equivalently reformulated as
\begin{equation}\label{Eq_Deltax_equation}
	\Delta x \left[ \frac{1}{\lambda} \frac{x_{\mathrm{p}}^{n-1} - x_{\mathrm{u}}^k}{\sqrt{(x_{\mathrm{p}}^{n-1} - x_{\mathrm{u}}^k)^2 + D_k^2}} + \frac{1}{\lambda_g} \right] \!\!= m_k, \forall k \in \{1,2\},
\end{equation}
where $\Delta x = x_{\mathrm{p}}^{n}- x_{\mathrm{p}}^{n-1}$. Combining the equations (\ref{Eq_Deltax_equation}) for both users, we obtain the following expression for the initial phase‐adjustment $\widehat{\Delta} x$ as follows
\begin{equation}\label{Eq_widehatDeltaRight}
	\widehat{\Delta} x = \frac{\lambda(m_1 - m_2)}{\frac{x_{\mathrm{p}}^{n-1} - x_{\mathrm{u}}^1}{\sqrt{(x_{\mathrm{p}}^{n-1} - x_{\mathrm{u}}^1)^2 + D_1^2}} - \frac{x_{\mathrm{p}}^{n-1} -  x_{\mathrm{u}}^2}{\sqrt{(x_{\mathrm{p}}^{n-1} - x_{\mathrm{u}}^2)^2 + D_2^2}}}
\end{equation}
Therefore, given the position of the ($n−1$)-th PA, we can refine the position of $n$-th PA as ${x}_{\mathrm{p}}^{n} = x_{\mathrm{p}}^{n-1} + \widehat{\Delta} x$. Similarly, for a given $x_{\mathrm{p}}^{n}$, we can refine the position of $n-1$-th PA as ${x}_{\mathrm{p}}^{n-1} = x_{\mathrm{p}}^{n} - \widehat{\Delta} x^{\prime}$, with $\widehat{\Delta} x^{\prime}$ given by  
\begin{equation}\label{Eq_widehatDeltaLeft}
\widehat{\Delta} x^{\prime} = \frac{\lambda(m_1 - m_2)}{\frac{x_{\mathrm{p}}^{n} - x_{\mathrm{u}}^1}{\sqrt{(x_{\mathrm{p}}^{n} - x_{\mathrm{u}}^1)^2 + D_1^2}} - \frac{x_{\mathrm{p}}^{n} -  x_{\mathrm{u}}^2}{\sqrt{(x_{\mathrm{p}}^{n} - x_{\mathrm{u}}^2)^2 + D_2^2}}}.
\end{equation}

Because the above combination step removes all $\lambda_g$ terms, the current phase $\widehat{\Delta} x$ and $\widehat{\Delta} x^{\prime}$ no longer satisfy the original phase‐alignment constraints (\ref{Eq_Deltax_equation}) for both users. Hence, ${x}_{\mathrm{p}}^{n}$ must be further refined to achieve a feasible solution. Actually, a single uniform PA phase adjustment $ \Delta x$ cannot maximize channel gains for both users simultaneously in general. Rather than enforcing (\ref{Eq_Deltax_equation}) exactly for both users, we seek a balance to minimize transmit power via the following fine‐tuning procedure. To preserve the large-scale path loss determined in the first stage, we use the central PA as the reference point, index $c=(\frac{N}{2})$, for all subsequent adjustments.
\begin{itemize}
	\item According to coarse PA locations design, the obtained location on $x$-axes of the $(\frac{N}{2})$-th PA is $x_{\mathrm{p}}^{c}$. 
	\item For $n = \frac{N}{2}+1,\dots,N$, we first derive the relative offset $\delta = x_{\mathrm{p}}^{n} - x_{\mathrm{p}}^{c}$. 
	\item Next, if $\delta \!\!\!\!\! \geq \!\!\!\!\! 4\Delta$, to satisfy the feasible set of PA location constraints, we set the segment between $\left[x_{\mathrm{p}}^{n}-3\Delta, x_{\mathrm{p}}^{n}+3\Delta\right]$. Otherwise, we restrict the segment between $\left[x_{\mathrm{p}}^{c}+\Delta, x_{\mathrm{p}}^{n}+3\Delta\right]$. Within the segment, we choose the refined location on $x$-axes which minimizes the objective in the problem (\ref{Pro_NOMA_final}). 
	\item Then, we apply this refinement sequentially for the location of $n$-th ($n= \frac{N}{2}+1,\dots, N$) PA.
	\item Similarly, we repeat the above steps in reverse order for $n$-th ($n=\frac{N}{2}-1,\dots,1$). 
\end{itemize}
The desired PA locations ${x}_{\mathrm{p}}^{n,*}$ can be obtained by applying the above fine‐tuning procedure sequentially across all antennas. The same two‐stage algorithm can be applied to solve (\ref{Pro_FDMA_simp}) for FDMA. With the interior‐point method, the computational complexity of the SCA algorithm is about $\mathcal{O}(I_{\mathrm{iter}}N^{3.5})$, where $I_{\mathrm{iter}}$ is the total number of SCA iterations. The computational complexity for fine-turning design is $\mathcal{O}(N)$, which is negligible in comparison. Hence, the overall computational complexity for the proposed algorithm is $\mathcal{O}(I_{\mathrm{iter}}N^{3.5})$.

\subsection{TDMA}
For TDMA, the two PA deployment vectors $\mathbf{x}_{\mathrm{p,1}}$ and ${\mathbf{x}_{\mathrm{p,2}}}$ are decoupled in the objective function of (\ref{Pro_TDMA_simp}) due to time-switching of PASS. Consequently, the power minimization problem for user $k$ can be simplified as
\begin{subequations}\label{Pro_TDMA_simp2}
	\begin{align}
		\min _{\mathbf{x}_{\mathrm{p,k}}} \quad &  \frac{\left(2^{2 \gamma_1}-1\right) N\sigma_1^2}{2 |v_1\left(\mathrm{\mathbf{x}}_{\mathrm{p},k}\right)|^2} \\ 
		\text {s.t.}~~
		& (\ref{Pro_NOMA}c), (\ref{Pro_NOMA}d).
	\end{align}
\end{subequations}
To solve this problem, the global optimal solution is provided in the following lemma.
\begin{Lemma}
The optimal solution $\mathbf{x}_{\mathrm{p,k}}$ deploys the $N$ PAs symmetrically around $x_{\mathrm{u}}^k$ with equal spacing $\Delta$, provided that $D_k \gg \Delta$, which is given by
\begin{equation}
	x_{\mathrm{p,k}}^*=x_{\mathrm{u}}^k-\frac{(N-1)}{2} \Delta+(i-1) \Delta, i=1, \cdots, N.
\end{equation}
\end{Lemma}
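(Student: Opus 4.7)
My plan is to upper bound $|v_k(\mathbf{x}_{\mathrm{p,k}})|^2$ by its coherent-combining limit and then verify that the symmetric, minimum-spacing configuration attains this limit in the regime $D_k \gg \Delta$. First, I would apply the triangle inequality to the complex sum in \eqref{Eq_v_k} to obtain
\begin{equation}
|v_k(\mathbf{x}_{\mathrm{p,k}})| \leq \sum_{n=1}^{N} \frac{\eta}{\sqrt{(x_{\mathrm{p,k}}^n - x_{\mathrm{u}}^k)^2 + D_k^2}},
\end{equation}
with equality iff the phases $\phi_k^n$ are equal modulo $2\pi$. Because only one user is active per time slot, the alignment condition \eqref{phi} collapses from two coupled equations (as in the NOMA/FDMA case) to a single equation per gap, which can always be satisfied by wavelength-scale perturbations analogous to \eqref{Eq_Deltax_equation}. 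Under $D_k \gg \Delta$, such sub-wavelength shifts leave each summand unchanged to leading order, so the bound is effectively tight and the problem reduces to maximizing this real-valued sum.

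Next, I would convert the maximization into a strictly convex quadratic program. Taylor-expanding the summand about $x_{\mathrm{p,k}}^n = x_{\mathrm{u}}^k$ using $D_k \gg \Delta$ gives
\begin{equation}
\frac{1}{\sqrt{(x_{\mathrm{p,k}}^n - x_{\mathrm{u}}^k)^2 + D_k^2}} \approx \frac{1}{D_k} - \frac{(x_{\mathrm{p,k}}^n - x_{\mathrm{u}}^k)^2}{2 D_k^3},
\end{equation}
so maximizing the upper bound is, up to additive constants, equivalent to minimizing $\sum_n (x_{\mathrm{p,k}}^n - x_{\mathrm{u}}^k)^2$ subject to $x_{\mathrm{p,k}}^{n+1} - x_{\mathrm{p,k}}^n \geq \Delta$, with the box constraints on $[0,L]$ assumed inactive at the optimum for generic $L$.

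I would then verify the claimed placement via KKT. Attaching multipliers $\mu_n \geq 0$ to the spacing constraints (with the convention $\mu_0 = \mu_N = 0$), stationarity yields $\mu_n = -2\sum_{i=1}^n (x_{\mathrm{p,k}}^i - x_{\mathrm{u}}^k)$, and the boundary condition $\mu_N = 0$ forces the centroid $\sum_i x_{\mathrm{p,k}}^i = N x_{\mathrm{u}}^k$. Substituting the candidate $x_{\mathrm{p,k}}^* = x_{\mathrm{u}}^k + (i - (N+1)/2)\Delta$ gives $\mu_n = \Delta n(N-n) > 0$ for $1 \leq n \leq N-1$, so every spacing constraint is active and dual feasibility holds. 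Strict convexity of the quadratic objective over the polyhedral feasible set then promotes this KKT point to the unique global minimum, which coincides with the claimed placement after rewriting $(i - (N+1)/2)\Delta = -(N-1)\Delta/2 + (i-1)\Delta$.

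I expect the main obstacle to be the clean decoupling of phase from magnitude in the first step: strictly, the positions that achieve coherent combining are wavelength-scale perturbations of the symmetric placement rather than the placement itself, and the regime $D_k \gg \Delta$ is precisely what guarantees these perturbations are invisible to the distance-based magnitudes to leading order. Without this asymptotic assumption, phase and magnitude would need to be treated jointly, which the lemma sidesteps by stating the result only in this regime.
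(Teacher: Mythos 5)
Your argument is correct in substance, but it is worth noting that the paper itself does not prove this lemma at all: its ``proof'' is a pointer to Appendix B of the cited work \cite{Xu2025RateMax}, so your self-contained derivation is necessarily a different (and more informative) route. Your structure is sound: the triangle-inequality bound on $|v_k(\mathbf{x}_{\mathrm{p},k})|$ is tight up to sub-wavelength position perturbations, and because TDMA serves a single user per slot the per-gap alignment condition analogous to \eqref{Eq_Deltax_equation} involves only one user and can always be met by shifts of order $\lambda$, which leave the distance terms unchanged to leading order --- this matches the paper's remark that an ``additional phase refinement step'' completes the TDMA solution, i.e.\ the lemma is really about the magnitude part. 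Your KKT verification of the reduced quadratic program is also correct: with multipliers $\mu_n=\Delta n(N-n)>0$ all spacing constraints are active, the centroid condition pins the cluster at $x_{\mathrm{u}}^k$, and strict convexity gives uniqueness. Two small caveats, both at the same level of informality as the lemma statement itself: the quadratic truncation of $\bigl((x_{\mathrm{p},k}^n-x_{\mathrm{u}}^k)^2+D_k^2\bigr)^{-1/2}$ actually requires the whole array aperture $(N-1)\Delta$ (not merely $\Delta$) to be small relative to $D_k$ --- equivalently, one can avoid the truncation altogether by invoking concavity of $t\mapsto(t^2+D_k^2)^{-1/2}$ on $|t|<D_k/\sqrt{2}$, which yields the same equally-spaced symmetric optimizer without an approximation error term; and the box constraint $0\le x_{\mathrm{p},k}^n\le L$ must be assumed inactive, i.e.\ the user must not sit within $(N-1)\Delta/2$ of the waveguide ends, an assumption the paper also makes implicitly. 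With those provisos, your proof establishes exactly what the lemma claims, whereas the paper delegates the argument entirely to the reference.
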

\begin{proof}
See \cite[Appendix B]{Xu2025RateMax}.
\end{proof}
By applying an additional phase refinement step, the optimal PA locations under TDMA can be fully determined \cite{Xu2025RateMax}.

\begin{figure*}[htbp]
	\centering
	\begin{minipage}[t]{0.32\textwidth}
		\raggedleft
		\includegraphics[width=\textwidth]{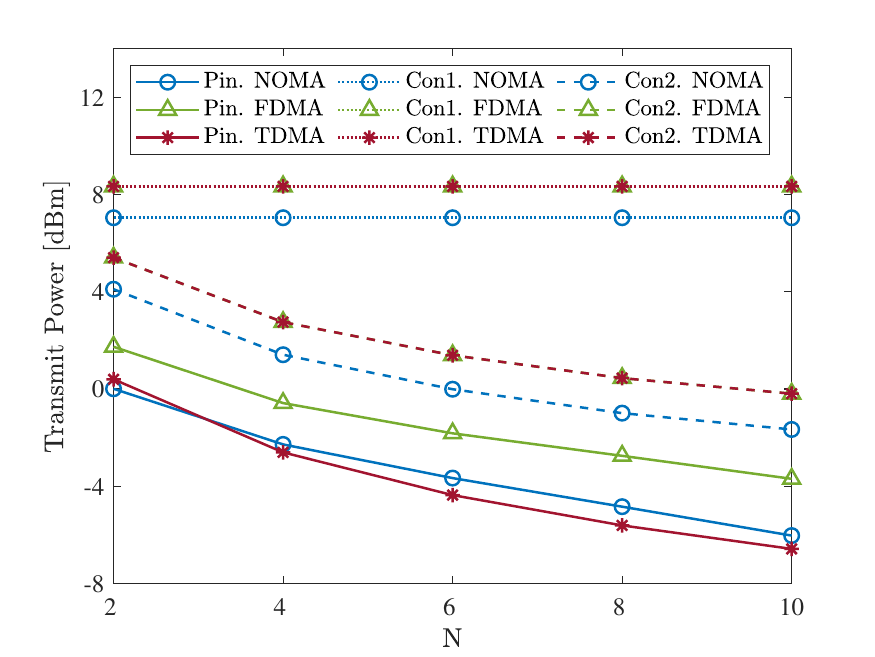}
		\caption{The transmit power changes with the number of PAs $N$.}
		\label{Fig:Fig_NvsP}
	\end{minipage}\hfill
	\begin{minipage}[t]{0.32\textwidth}
		\centering
		\includegraphics[width=\textwidth]{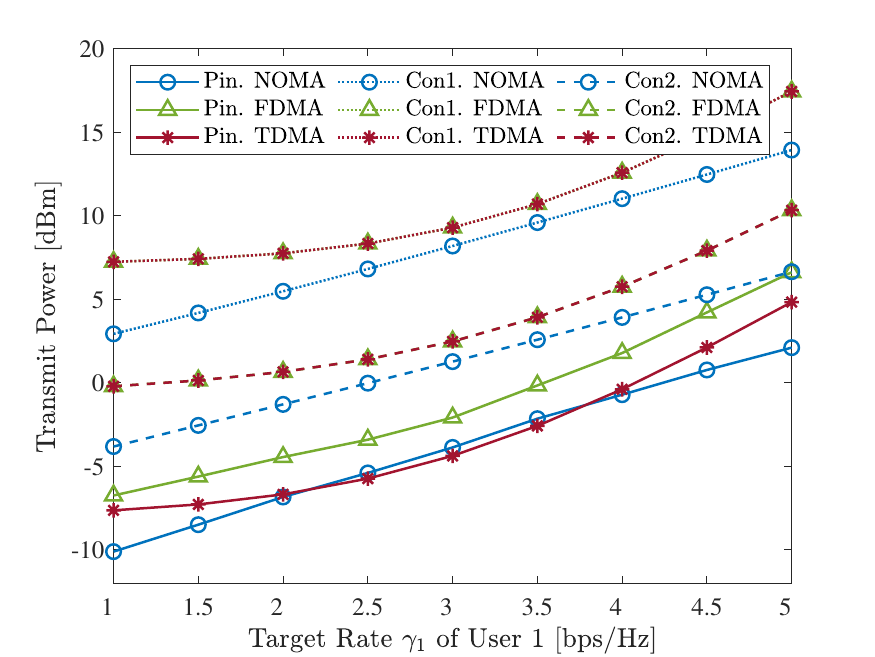}
		\caption{The transmit power changes with the target rate $\gamma_1$ of user 1. }
		\label{Fig:Fig_RatevsP}
	\end{minipage}\hfill
	\begin{minipage}[t]{0.32\textwidth}
		\raggedright
		\includegraphics[width=5.67cm]{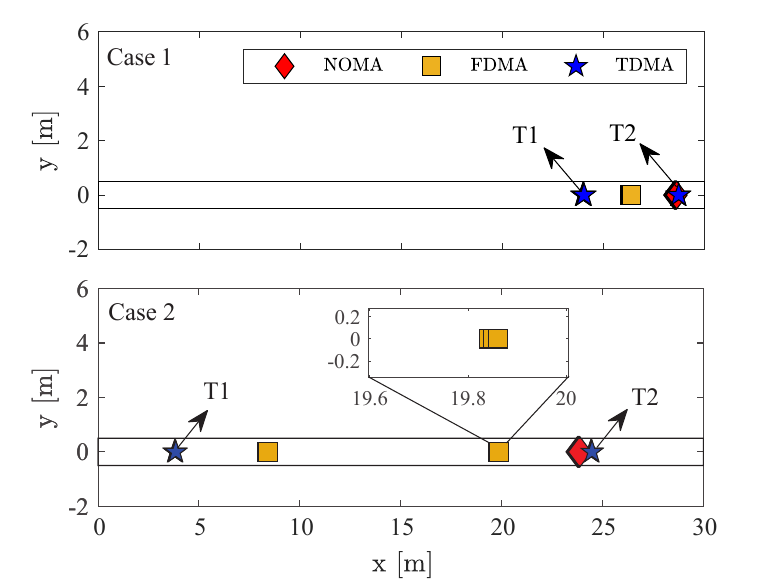}
		\caption{PA deployments for different MA schemes.}
		\label{Fig:Fig_Distribution}
	\end{minipage}
\end{figure*}

\section{Simulation Results}\label{Sec_Simulation}
In this section, numerical results are provided to validate the effectiveness of the proposed two-stage algorithm, and explore the most suitable MA schemes for PASS-based multi-user communications. Unless specified otherwise, the simulation parameters used in this paper are as follows: the height of the waveguide is set to $d=3$ m, the side length of the square area is set to $L=15$ m, the number of PAs is set to $N=6$, the target rate of the user is set to $\gamma_k= 3$ bps/Hz, the carrier frequency is set to $f_c = 28$ GHz, the user noise power is set to $\sigma_k^2 = -90$ dBm, the minimum spacing between PAs is set to $\delta = \frac{\lambda}{2}$, and $n_\text{neff} = 1.4$. For comparison, we include two baselines: 1) conventional single antenna system, where the AP is equipped with one antenna fixed at location $[0,0,d]$ (labeled Con1.), and 2) conventional antenna array system, where the AP is equipped with the same number of antennas as PASS, but with only one RF chain active and fixed at location $[0,0,d]$, employing analog beamforming (labeled Con2.).

Fig. \ref{Fig:Fig_NvsP} shows the transmit power required by different MA schemes versus the number of PAs. It can be observed that compared to the conventional antenna baselines, the required transmit power at the AP is significantly reduced with the aid of PASS. This underscores the advantages of PASS. Meanwhile, it is quite expected that NOMA consistently outperforms FDMA with PASS. However, for TDMA, it is interesting to see that the required transmit power by TDMA is less than NOMA when $N\geq4$, which is due to the time-switching feature of PASS via TDMA. Note that although TDMA can utilize the advantage of time-switching to obtain more efficiency, it still incurs extra deployment complexity and overhead. In contrast, without PASS, NOMA always requires less transmit power than both TDMA and FDMA. 

Fig. \ref{Fig:Fig_RatevsP} illustrates the transmit power at the AP versus the target rate $\gamma_1$ of user 1, with the target rate $\gamma_2$ of the user 2 fixed at $3$ bps/Hz. First, we can observe that TDMA requires less transmit power than NOMA when the target rates of the two users are symmetric, i.e., $\gamma_1=\gamma_2=3$ bps/Hz. However, when the target rates of the two users become asymmetric, the required transmit power by NOMA is less than that of OMA. This can be explained by the fact that NOMA has higher spectrum efficiency than OMA under user channels with disparity, even with TDMA scheme deployed to utilize the advantage of time-switching feature. 

Fig. \ref{Fig:Fig_Distribution} shows the optimized PA activation location for different MA schemes under two user distribution scenarios, namely Case 1 and Case 2, with the area size is $L = 30$ m. For Case 1, two users are closely located at $\psi_{\mathrm{u}}^1=\left[24.01, -10.74, 0\right]$ and $\psi_{\mathrm{u}}^2=\left[28.72, -0.44, 0\right]$. The optimized PA activation locations along x-axis are approximately $28.57$ m for NOMA and $26.36$ m for FDMA. For TDMA, leveraging the time-switching feature, the PA cluster is located at $28.71$ m for user 1 (see T1) and $24.01$ m for user 2 (see T2). Note that the PAs for NOMA are deployed closer to the user who is closer to the waveguide. Thus the channel conditions of users become more asymmetric, which is preferable for NOMA. Instead, FDMA places PAs more symmetrically to balance the channel gains of both users. For Case 2, with users widely separated at $\psi_{\mathrm{u}}^1=\left[24.44, 12.17, 0\right]$ and $\psi_{\mathrm{u}}^2=\left[3.81, 12.40, 0\right]$, the optimized PA activation location for NOMA is around $23.81$ m. FDMA splits the PAs into two symmetric clusters centered at $19.85$ m and $8.40$ m, respectively. For TDMA, PA locations are around $24.44$ and $3.80$ m to serve each user, respectively. For NOMA, it is preferable to deploy the PAs in an asymmetric scheme to achieve different channel conditions for two users. By contrast, FDMA benefits from separating the PA clusters to serve users individually, enhancing received signal strength more effectively than a shared central deployment. These results in Fig. \ref{Fig:Fig_Distribution} demonstrate important guidelines for PASS deployment for different MA schemes.
 
\section{Conclusion}\label{Sec_Conclusion}
This paper investigated a PASS-based two-user communication system for NOMA, FDMA, and TDMA schemes. For each MA scheme, a pinching beamforming optimization problem was formulated to minimize the transmit power at AP subject to the user rate requirements. To solve the resulting non-convex problems, a two-stage algorithm was developed for NOMA and FDMA to obtain desired solutions, and an optimal pinching beamforming was derived for TDMA with the aid of time-switching feature of PASS. Our numerical results revealed that PASS are capable of achieving a significant performance gain over conventional antenna systems. Furthermore, our results also showed that, NOMA always has superior performance than FDMA, while TDMA outperforms NOMA for symmetric user rate requirements by employing time-switching feature of PASS.
\appendices

\bibliographystyle{IEEEtran}

\bibliography{reference}
\newpage

\vfill

\end{document}